\documentclass[english]{DDCLSconf}
\usepackage[comma,numbers,square,sort&compress]{natbib}
\usepackage{epstopdf}
\usepackage{amsmath, amsfonts, amssymb, amsthm}
\usepackage{graphicx, epstopdf}
\usepackage{url, verbatim, xcolor}
\usepackage{booktabs, array, stfloats}
\usepackage{soul}
\usepackage{xcolor}
\sethlcolor{yellow}
\usepackage{graphicx}  % 用于插入图片
\usepackage{subfig}    % 关键：用于支持 \subfloat 命令

\newtheorem{theorem}{Theorem}

\newtheorem{assumption}{Assumption}
\theoremstyle{remark}
\newtheorem{remark}{Remark}
\begin{document}

\title{On Switched Event-triggered Full State-constrained Formation Control for Multi-vehicle Systems}

\author{Zihan Li\aref{1},
        Ziming Wang\aref{2,3},
        Xin Wang\aref{1}\textsuperscript{*}}

\affiliation[1]{Southwest University, Chongqing, 400715, China.
        \email{xinwangswu@163.com}}
\affiliation[2]{The Hong Kong University of Science and Technology (Guangzhou), Guangzhou, 511453, China.}
\affiliation[3]{Tsinghua University, Beijing, 100084, China.}
\maketitle

\begin{abstract}
Vehicular formation control is an important component of intelligent transportation systems (ITSs). In practical implementations, the controller design needs to satisfy multiple state constraints, including inter-vehicle spacing and vehicle speed. When system states approach the constraint boundaries, control singularity and excessive control effort may arise, which limits the practical applicability of existing methods. To address this problem, this paper investigates a class of nonlinear vehicular formation systems for autonomous vehicles (AVs) with uncertain dynamics and develops a switched event-triggered control framework. A smooth nonlinear mapping is first introduced to transform the constrained state space into an unconstrained one, thereby avoiding singularity near the constraint boundaries. A radial basis function neural network (RBFNN) is then employed to approximate the unknown nonlinear dynamics online, based on which an adaptive controller is constructed via the backstepping technique. In addition, a switched event-triggered mechanism (SETM) is designed to increase the control update frequency during the transient stage and reduce the communication burden during the steady-state stage. Lyapunov-based analysis proves that all signals in the closed-loop system remain uniformly bounded and that Zeno behavior is excluded. Simulation results verify that the proposed method achieves stable platoon formation under prescribed state constraints while significantly reducing communication updates.
\end{abstract}

\keywords{Formation control, State constrained system, Event-triggered control (ETC), Autonomous vehicles (AVs)}

\section{Introduction}
For decades, formation control in multi-agent systems (MASs) has attracted widespread attention due to its applications in autonomous vehicles (AVs), collaborative robots, and intelligent transport systems \cite{liu2026predefined,ref_fei_2024}. In these scenarios, vehicle platooning is regarded as a key technology capable of improving traffic efficiency and reducing energy consumption \cite{wang2024state}, while also enhancing road safety \cite{ref_liu_2025,liao2026event}. During convoy operation, each vehicle must not only track the target position but also maintain a safe following distance and comply with speed limits. If these constraints are violated, collisions or convoy instability may occur. Therefore, strict adherence to state constraints is essential for the practical application of autonomous convoy systems \cite{ref_wang_emerg_2024,ref_yang_2025}.

However, the barrier Lyapunov function (BLF) method \cite{ref_sachan_2018,ref_liu_automatica_2018,ref_wang_ccdc_2024} often faces structural singularity issues near constraint boundaries. When states approach these boundaries, penalty terms can increase sharply, leading to excessive control inputs or numerical sensitivity \cite{ref_dong_2020}. In dynamic scenarios such as fleet reorganization, this may cause actuator saturation and degrade performance \cite{ref_wang_tcas_2024}. Consequently, ensuring safety constraints while avoiding control singularities remains a significant research challenge \cite{ref_xing_2017,ref_wang_smc_2025,ref_gong_2025,ref_min_2020,ref_wang_cybern_2025,li2026neural}.

Constraint transformation methods have recently received increasing attention \cite{ref_liang_2026,liu2026approximation}. By mapping constrained physical states into an unconstrained space, they can avoid the singularity of conventional barrier functions \cite{ref_qiao_2022}. Control design in the transformed space also reduces numerical difficulty and simplifies stability analysis \cite{ref_wang_case_2025,ref_wang_arxiv_2025}. Motivated by this idea, this paper uses a smooth nonlinear mapping to handle spacing and velocity constraints \cite{ref_zhang_2023}.

In addition to safety constraints, communication efficiency is also a key issue in distributed fleet control \cite{ref_wen_2018}. Conventional methods typically rely on periodic sampling and continuous communication, which can easily lead to the transmission of large amounts of redundant data \cite{ref_wu_2022}. Event-triggered control (ETC) updates control signals only when trigger conditions are met, thereby significantly reducing the communication burden \cite{ref_nguyen_2020,jiang2026when,ref_xue_2024,ref_wang_amc_2021,wang2025fixed}.

Most ETC schemes use a fixed triggering threshold \cite{ref_wen_2018,ref_nguyen_2020,ref_wang_amc_2021}. For nonlinear vehicular platoons with full-state constraints, this may cause a trade-off between transient performance and communication efficiency: a small threshold increases updates, whereas a large one slows the response. The problem becomes more pronounced when state constraints, model uncertainty, and stage-dependent errors are considered together \cite{ref_xue_2024,ref_xu_2026}. To address this issue, this paper introduces a switched event-triggered mechanism that adjusts the threshold according to the current error magnitude.

Existing studies mainly emphasize either communication efficiency or constrained control \cite{wang2024state,ref_liu_automatica_2018,ref_wang_ccdc_2024,ref_wang_tcas_2024}. For vehicle formations with full-state constraints and uncertain dynamics, it is still difficult to develop a unified framework that balances constraint satisfaction, tracking performance, robustness, and communication efficiency \cite{ref_wang_smc_2025,liu2026approximation}.

To address this issue, this paper proposes a robust adaptive ETC scheme for nonlinear vehicle formations with full-state constraints. A smooth nonlinear mapping transforms constrained states into unconstrained coordinates, avoiding control singularities. An RBFNN approximates unknown dynamics, and a switched event-triggered mechanism (SETM) adjusts the update frequency. The main contributions are as follows:

\begin{itemize}
    \item We develop a smooth diffeomorphism-based mapping that transforms constrained states into an unconstrained domain, ensuring velocity and spacing constraints without control singularities. 

    \item We develop an adaptive controller combining mapping gains and RBFNNs to compensate for uncertainties, and prove that all closed-loop signals are uniformly ultimately bounded (UUB). 

    \item We design a SETM that adjusts update thresholds to reduce communication load and actuator wear while excluding Zeno behavior. 
\end{itemize}

% \begin{figure}
%     \centering
%     \includegraphics[width=1\linewidth]{Square Formation.pdf}
%     \caption{AVs formation control in square formation.}
%     \label{fig:placeholder}
% \end{figure}

Section \ref{sec:preliminary} introduces graph theory, vehicle models, state constraints and control objectives. Section \ref{sec:method} details an adaptive control framework that includes the full-state constraint mapping, the neural network, and the SETM. Section \ref{sec:stability} proves the system’s stability using Lyapunov theory. Section \ref{sec:example} verifies the algorithm’s effectiveness through a simulation. Finally, Section \ref{sec:conclusion} summarizes the paper.

\section{Preliminaries and Problem Formulation}~\label{sec:preliminary}

\subsection{Graph theory and square formation} \label{subsec:Graph_Formation}
Consider a MAS consisting of $N$ AVs. The communication topology is described by a connected undirected graph $\mathcal{G}=(\mathcal{V},\mathcal{E},\mathcal{A})$, where $\mathcal{V}=\{1,\dots,N\}$ is the node set, $\mathcal{E}$ is the edge set, and $\mathcal{A}=[a_{ij}]$ is the adjacency matrix. If vehicles $i$ and $j$ exchange information, then $a_{ij}=a_{ji}>0$; otherwise $a_{ij}=0$. The neighbor set of vehicle $i$ is defined as $\mathcal{N}_i=\{j\in\mathcal{V}\mid a_{ij}>0\}$. The Laplacian matrix is given by $L=D-\mathcal{A}$, where $D=\mathrm{diag}(d_1,\dots,d_N)$ with $d_i=\sum_{j=1}^N a_{ij}$. 

Let $p_i=[x_i,y_i]^T$ and $v_i=[v_{xi},v_{yi}]^T$ denote the position and velocity of vehicle $i$, respectively. Each vehicle is assigned a desired relative coordinate $p_i^d=[x_i^d,y_i^d]^T$. For a four-vehicle square formation, the desired positions are chosen as
$p_1^d=(0,0)$, $p_2^d=(d,0)$, $p_3^d=(d,d)$, and $p_4^d=(0,d)$. The formation tracking error based on neighbor information is defined as
\begin{equation}
z_{1i}=\sum_{j\in\mathcal{N}_i}a_{ij}\left[(p_i-p_j)-(p_i^d-p_j^d)\right].
\label{eq:neighbor_error}
\end{equation}
where $z_{1i}=0$ for all $i\in\mathcal{V}$, and $p_i-p_j=p_i^d-p_j^d$, indicating that the desired square formation is achieved.

\subsection{Vehicle dynamics}
\label{subsec:Vehicle_Dynamics}
Consider the second-order nonlinear dynamics of the $i$th AV. Let $p_i=[x_i,y_i]^T$ and $v_i=[v_{xi},v_{yi}]^T$ denote the position and velocity, respectively.
Define the state vector $x_i = [p_i, v_i]^T$. The vehicle dynamics are described by the following equations:
\begin{equation}
\begin{cases}
\dot{p}_i = v_i \\
\dot{v}_i = f_i(x_i) + g_i(x_i)u_i + d_i(t)
\end{cases}
\label{eq:vehicle_dynamics}
\end{equation}
where $u_i$ is the control input, $f_i(x_i)$ and $g_i(x_i)$ are unknown smooth nonlinear functions, and $d_i(t)$ represents a bounded external disturbance. To approximate the unknown nonlinear terms $f_i(x_i)$, a RBFNN is employed
\begin{equation}
f_i(x_i)=W_i^T\phi_i(x_i)+\varepsilon_i
\label{eq:RBF_definition}
\end{equation}
where $W_i$ is the ideal weight matrix, $\phi_i(x_i)$ is a Gaussian vector, and $\varepsilon_i$ represents the approximation error. Suppose the error upper bound as $\|\varepsilon_i\|\le\bar{\varepsilon}_i$, where $\bar{\varepsilon}_i$ is a constant.

\subsection{State constraints}
\label{subsec:State_Constraints}
Each vehicle is subject to spacing and speed constraints.
Let $d_i = p_{i-1} - p_i$ denote the longitudinal spacing, and $v_i$ denote the speed. The safety constraints are $d_{\min} < d_i < d_{\max}$, $v_{\min} < v_i < v_{\max}$. To incorporate the speed constraints into the controller design, we introduce a smooth diffeomorphism that maps the constrained speed $v_i$ to the unconstrained variable $s_{v,i}$, then we have
\begin{equation}
s_{v,i}=\ln\left(\frac{v_i-v_{\min}}{v_{\max}-v_i}\right)
\label{eq:sv_mapping}
\end{equation}

The logarithmic mapping is valid for $v_{\min} < v_i < v_{\max}$. Under feasible initial conditions, the inverse mapping and Lyapunov analysis keep the closed-loop trajectories inside the admissible set. In practice, a projection or saturation operator can be used before evaluating the mapping to avoid numerical failure, such that
\begin{equation}
    v_i \in [v_{\min} + \eta,\, v_{\max} - \eta]
\end{equation}
where $\eta > 0$ is a small margin.

Based on the formation tracking error $z_{1i}$ defined in Equation \ref{eq:neighbor_error}, the second-order error is constructed as follows:
\begin{equation}
z_{2i}=s_{v,i}-\alpha_i
\label{eq:z2_constrained}
\end{equation}
where $\alpha_i$ represents the virtual control input.

\subsection{Control Objectives}

The objective is to design a controller $u_i$ and an associated SETM for each vehicle so that the MAS achieves the desired square formation under all state constraints. A nonlinear mapping enforces distance and velocity constraints, an RBFNN compensates for unknown dynamics and disturbances, and the SETM reduces communication load while preserving control performance. Lyapunov analysis guarantees UUB of all closed-loop signals and excludes Zeno behavior. 

\begin{assumption}
The fleet communication topology $\mathcal{G}$ is an undirected connected network, i.e., for any vehicles $i$ and $j$, there exists at least one communication path between them.
\label{assum:graph}
\end{assumption}

\begin{assumption}
Both the external disturbance $d_i(t)$ and the RBFNN approximation error $\varepsilon_i(x_i)$ are bounded. Specifically, there exist constants $\bar{d}_i$ and $\bar{\varepsilon}_i$ such that
\begin{equation}
\|d_i(t)\| \le \bar{d}_i, \quad \|\varepsilon_i(x_i)\| \le \bar{\varepsilon}_i, \quad \forall t \ge 0.
\end{equation}
\label{assum:disturbance}
\end{assumption}

\begin{assumption}
The control input gain matrix $g_i(x_i)$ in Equation (1) is nonsingular and invertible for all $x_i \in \mathbb{R}^n$. Furthermore, there exist constants $g_{\min}$ and $g_{\max}$ such that
\begin{equation}
0 < g_{\min} \le \|g_i(x_i)\| \le g_{\max}.
\end{equation}
\label{assum:gain}
\end{assumption}

\section{Main Results}\label{sec:method}

\subsection{Transformed system dynamics}
\label{subsec:Transformed_Dynamics}

This subsection derives the unconstrained dynamics in the transformed coordinates. Based on the distance $d_i = \|p_{i-1} - p_i\|$ and its derivative $\dot{d}_i = \frac{(p_{i-1}-p_i)^T}{d_i}(v_{i-1}-v_i)$, we apply the chain rule to the spacing mapping $s_{d,i} = \ln((d_i-d_{\min})/(d_{\max}-d_i))$. The transformed spacing dynamics are obtained as:
\begin{equation}
    \dot{s}_{d,i} = \Gamma_{d,i} \frac{(p_{i-1} - p_i)^T}{d_i} (v_{i-1} - v_i)
    \label{eq:sd_dot_final}
\end{equation}
where $\Gamma_{d,i} = (d_i-d_{\min})^{-1} + (d_{\max}-d_i)^{-1}$ is the transformation gain.

Similarly, for the longitudinal velocity $v_i$, the mapped state is $s_{v,i} = \ln((v_i - v_{\min})/(v_{\max} - v_i))$. Differentiating $s_{v,i}$ and substituting the vehicle dynamics $\dot{v}_i = f_i(x_i) + g_i(x_i)u_i + d_i(t)$, the transformed velocity dynamics is derived as:
\begin{equation}
    \dot{s}_{v,i} = \Gamma_{v,i} [f_i(x_i) + g_i(x_i)u_i + d_i(t)]
    \label{eq:sv_dot}
\end{equation}
where $\Gamma_{v,i} = (v_i - v_{\min})^{-1} + (v_{\max} - v_i)^{-1}$. Consequently, the original constrained formation problem is converted into an unconstrained stabilization task.

Both $\Gamma_{d,i}$ and $\Gamma_{v,i}$ increase as the states approach the constraint boundaries. This strengthens the repulsive effect of the transformation but also increases numerical sensitivity. Therefore, the analysis is established under admissible initial conditions, and a small safety margin is used in practice to avoid excessive amplification.

\subsection{Adaptive controller design}
\label{subsec:Controller_Design}

An adaptive backstepping controller is developed to drive formation errors to zero while satisfying state constraints. For the position layer, the error derivative is $\dot{z}_{1i} = \sum_{j \in \mathcal{N}_i} a_{ij} (v_i - v_j)$. To achieve consensus, we consider the Lyapunov function $V_{1} = \frac{1}{2} \sum_{i=1}^N z_{1i}^T z_{1i}$ and design the virtual control law as:
\begin{equation}
    \alpha_i = -K_{1i} z_{1i}
    \label{eq:alpha_design}
\end{equation}
where $K_{1i}$ is a positive gain matrix. Define the transformed velocity-layer tracking error as $z_{2i} = s_{v,i} - \alpha_i$. The velocity layer dynamics then becomes: 
\begin{equation}
    \label{eq:velocity}
    \dot{z}_{2i} = \Gamma_{v,i}\big[f_i(x_i) + g_i(x_i)u_i + d_i(t)\big] - \dot{\alpha}_i .
\end{equation}

The unknown dynamics are approximated by the RBFNN such that $f_i(x_i) = W_i^T \phi_i(x_i) + \varepsilon_i$. To ensure stability, we construct the composite Lyapunov function $V = \sum_{i=1}^N (V_{1i} + \frac{1}{2} z_{2i}^T z_{2i} + \frac{1}{2} \tilde{W}_i^T \Gamma_i^{-1} \tilde{W}_i)$. The adaptive update law and the final control law are synthesized as:
\begin{align}
    \dot{\hat{W}}_i &= \Gamma_i \phi_i(x_i) z_{2i}^T - \sigma_{w,i} \hat{W}_i, \quad \sigma_{w,i} > 0 \label{eq:W_hat_dot} \\
    u_i &= g_i^{-1}(x_i) [ -K_{2i} z_{2i} - \hat{W}_i^T \phi_i(x_i) + \dot{\alpha}_i ] \label{eq:control_law}
\end{align}
where $K_{2i}$ is a positive gain matrix. Here, $\sigma_{w,i} > 0$ is a leakage coefficient that suppresses parameter drift caused by bounded disturbances and approximation errors. The modified adaptive law compensates the terms associated with $\tilde{W}_i$ and improves robustness. As a result, all closed-loop signals are UUB, and the formation errors converge to a neighborhood of the origin while satisfying the constraints.

% \begin{remark}
% Compared with the conventional gradient adaptive law, the leakage term in \eqref{eq:W_hat_dot} improves robustness and prevents excessive growth of the neural-network weights.
% \end{remark}

\subsection{Switched event-triggered mechanism}\label{subsec:SETM}

A SETM is introduced to reduce communication updates. The control input is updated only at discrete triggering instants. Let $\{t_k^i\}_{k=0}^{\infty}$ denote the discrete trigger sequence for vehicle $i$. Within the interval $t \in [t_k^i, t_{k+1}^i)$, the control input applied to the vehicle is maintained constant via a sample-and-hold process:

\begin{equation}
    u_i(t) = u_i(t_k^i)
    \label{eq:control_hold}
\end{equation}

The measurement error is defined as: 
\begin{equation}
    e_i(t) = z_{2i}(t_k^i) - z_{2i}(t)
    \label{eq:measurement_error}
\end{equation}
where $z_{2i}$ is the velocity-layer error defined in \eqref{eq:z2_constrained}. At the trigger instant $t_k^i$, it is evident that $e_i(t_k^i) = 0$.

The triggering function is defined as:  
\begin{equation}
    \Psi_i = \|e_i(t)\|^2 - \sigma_i(t) \|z_{2i}(t)\|^2
    \label{eq:trigger_function}
\end{equation}

In \eqref{eq:trigger_function}, $\sigma_i(t)$ is a dynamic switching parameter designed to adapt to different error magnitudes, defined by:
\begin{equation}
    \sigma_i(t) =
    \begin{cases}
    \delta_1, & \|z_{2i}(t)\| \ge \epsilon \\
    \delta_2, & \|z_{2i}(t)\| < \epsilon
    \end{cases}
    \label{eq:sigma_switch}
\end{equation}
where the design parameters satisfy $0 < \delta_1 < \delta_2 < 1$, and $\epsilon > 0$ represent a preset threshold. The next trigger time $t_{k+1}^i$ is determined by:
\begin{equation}
    t_{k+1}^i = \inf \left\{ t > t_k^i \mid \Psi_i(t) \ge 0 \right\}
    \label{eq:trigger_condition}
\end{equation}

By this mechanism, a smaller threshold $\delta_1$ is adopted when the tracking error is large to ensure frequent updates, whereas $\delta_2$ is used near the origin to enlarge the transmission intervals and reduce communication load without compromising stability.

Since all closed-loop signals are bounded, and $\phi_i(x_i)$, $\varepsilon_i$, $d_i(t)$, $\tilde{W}_i$, and $\Gamma_{v,i}$ are all bounded, the closed-loop dynamics in \eqref{eq:z2_dynamic_final} imply that there exists a constant $L>0$ such that $\|\dot{z}_{2i}\|\le L$. Hence, $\|e_i(t)\|\le L(t-t_k^i)$.

\subsection{Parameter selection and tuning procedure} \label{sec:param_tuning
}
For practical implementation, the selection principles of the controller gains and triggering parameters are summarized as follows.

The key parameters are selected hierarchically. First, $K_{1i}$ ensures convergence of the position-layer error. Then, $K_{2i}$ is chosen to satisfy Theorem 1, i.e., $k_{2,i} > \frac{1}{2} + \frac{\sigma}{2}$ with $\sigma = \max\{\delta_1,\delta_2\}$. The triggering parameters satisfy $0 < \delta_1 < \delta_2 < 1$: a smaller $\delta_1$ improves transient response, whereas a larger $\delta_2$ reduces communication near steady state. The threshold $\epsilon$ separates the two regimes and is chosen by balancing convergence and noise sensitivity.

\section{Stability Analysis}
\label{sec:stability}

\begin{theorem}\label{theorem1}
Consider the MASs \eqref{eq:vehicle_dynamics} under the proposed control scheme. If $k_1 > 0$ and $k_2 > \frac{1}{2} + \frac{\sigma}{2}$ with $\sigma = \max\{\delta_1, \delta_2\}$, all closed-loop signals are UUB. Moreover, the leakage term guarantees bounded adaptive weights under bounded disturbances. The tracking errors converge to a compact set around the origin, and the state constraints are strictly preserved. Furthermore, Zeno behavior is excluded.
\end{theorem}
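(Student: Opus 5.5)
We follow the standard Lyapunov route for the composite function $V=\sum_{i=1}^N\big(V_{1i}+\tfrac12 z_{2i}^Tz_{2i}+\tfrac12\mathrm{tr}(\tilde W_i^T\Gamma_i^{-1}\tilde W_i)\big)$ with $\tilde W_i=W_i-\hat W_i$. The argument has three parts: a differential inequality $\dot V\le -cV+C$, constraint preservation through the inverse mapping, and a positive lower bound on the inter-event times.

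\textbf{Step 1: implemented closed loop.} Under the sample-and-hold law \eqref{eq:control_hold}, the applied input is the controller \eqref{eq:control_law} evaluated at $t_k^i$. We write
\[
u_i(t)=g_i^{-1}(x_i)\big[-K_{2i}z_{2i}(t_k^i)-\hat W_i^T\phi_i+\dot\alpha_i\big]+\Delta_i(t),
\]
where $\Delta_i$ collects the mismatch in $\hat W_i$, $\phi_i$, $\dot\alpha_i$ and $g_i^{-1}$ between $t_k^i$ and $t$. By continuity and bounded derivatives on compact sets, $\Delta_i$ is bounded by a constant on any compact sublevel set of $V$, and it is absorbed into the disturbance term. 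Using $z_{2i}(t_k^i)=z_{2i}+e_i$ and $f_i=W_i^T\phi_i+\varepsilon_i$, the $z_{2i}$-dynamics \eqref{eq:velocity} become
\[
\dot z_{2i}=\Gamma_{v,i}\big[-K_{2i}z_{2i}-K_{2i}e_i+\tilde W_i^T\phi_i+\varepsilon_i+d_i+g_i\Delta_i\big]+(1-\Gamma_{v,i})\dot\alpha_i .
\]
We treat $\Gamma_{v,i}$ as a positive scalar gain that is bounded below by $4/(v_{\max}-v_{\min})$. On the admissible set with margin $\eta$ it is also bounded above. The $\dot\alpha_i$ residual is then a bounded term on compact sets. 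Alternatively, $\dot\alpha_i$ can be scaled by $\Gamma_{v,i}^{-1}$ in \eqref{eq:control_law}, which cancels it exactly; we would note this as the intended reading.

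\textbf{Step 2: derivative of $V$.} Since $\dot z_{1i}=\sum_j a_{ij}(v_i-v_j)$ and $s_{v,i}=z_{2i}+\alpha_i$, the position layer produces $-z_1^T(L\otimes I)K_1 z_1$-type negative terms plus cross terms in $z_{2}$. These cross terms are bounded by Young's inequality, using the local Lipschitz inverse of the mapping \eqref{eq:sv_mapping}. The event term is handled as follows. The trigger rule \eqref{eq:trigger_condition} guarantees $\|e_i\|^2\le\sigma_i\|z_{2i}\|^2\le\sigma\|z_{2i}\|^2$, so
\[
-z_{2i}^TK_{2i}e_i\le \tfrac12\|z_{2i}\|^2+\tfrac12\|e_i\|^2 \le \tfrac{1+\sigma}{2}\|z_{2i}\|^2 .
\]
This is exactly why the hypothesis requires $k_2>\tfrac12+\tfrac\sigma2$, and it leaves a net coefficient $k_2-\tfrac12-\tfrac\sigma2>0$.

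The term $\tilde W_i^T\phi_i z_{2i}^T$ cancels against the adaptive law \eqref{eq:W_hat_dot}. The leakage term contributes
\[
\sigma_{w,i}\,\mathrm{tr}(\tilde W_i^T\hat W_i)\le-\tfrac{\sigma_{w,i}}2\|\tilde W_i\|^2+\tfrac{\sigma_{w,i}}2\|W_i\|^2 .
\]
The terms $\varepsilon_i$, $d_i$ and the residuals are bounded via Assumption~\ref{assum:disturbance}, each by $\tfrac14\|z_{2i}\|^2$ plus a constant. Together these give $\dot V\le -cV+C$. Hence
\[
V(t)\le V(0)e^{-ct}+C/c ,
\]
which yields UUB of $z_{1i}$, $z_{2i}$ and $\tilde W_i$, and therefore boundedness of $\hat W_i$. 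Boundedness of $\alpha_i$ follows. The leakage claim is exactly the $-\tfrac{\sigma_{w,i}}2\|\tilde W_i\|^2$ term.

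\textbf{Step 3: constraints and Zeno exclusion.} Since $z_{2i}$ and $\alpha_i$ are bounded, $s_{v,i}$ is bounded. The inverse map
\[
v_i=\frac{v_{\min}+v_{\max}e^{s_{v,i}}}{1+e^{s_{v,i}}}
\]
then keeps $v_i$ strictly inside $(v_{\min},v_{\max})$, and in fact inside a margin determined by $\sup|s_{v,i}|$. This also justifies, a posteriori, the compact-set bounds on $\Gamma_{v,i}$ used in Step 1. The spacing constraint follows in the same way from the boundedness of $s_{d,i}$, with $\Gamma_{d,i}$ in \eqref{eq:sd_dot_final}.

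For Zeno exclusion, the bound $\|\dot z_{2i}\|\le L$ gives $\|e_i(t)\|\le L(t-t_k^i)$. A trigger requires $\|e_i\|\ge\sqrt{\sigma_i}\|z_{2i}\|$. While $\|z_{2i}\|\ge\epsilon$, this yields an inter-event time of at least $\sqrt{\delta_1}\,\epsilon/L$. The near-origin case $\|z_{2i}\|<\epsilon$ is the main obstacle, since a pure relative threshold does not give a uniform lower bound when $z_{2i}\to0$. Our first approach is to use the ratio $\|e_i\|/\|z_{2i}\|$, whose derivative is bounded by $(1+\|e_i\|/\|z_{2i}\|)\,\|\dot z_{2i}\|/\|z_{2i}\|$. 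If that fails, we would accept a practical-stability version by adding a small absolute offset to $\Psi_i$. We expect this step, together with the rigorous justification of the compact-set bounds on $\Gamma_{v,i}$ and $\dot\alpha_i$ in Step 1, to be the delicate part of the proof.
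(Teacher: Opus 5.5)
\textbf{Gap: the Zeno claim for $\|z_{2i}\|<\epsilon$.} Your outline follows the paper's proof: the same composite $V$, the update law cancelling the $\tilde W_i$ cross term, leakage handled by Young's inequality, the trigger inequality $\|e_i\|^2\le\sigma_i\|z_{2i}\|^2$ inserted into $\dot V$, the comparison lemma, the inverse logarithmic map, and $\|e_i(t)\|\le L(t-t_k^i)$ for Zeno. You also correctly spot the $(1-\Gamma_{v,i})\dot\alpha_i$ residual, which the paper drops. You are right that the regime $\|z_{2i}\|<\epsilon$ is where the Zeno argument breaks, but you have not closed it, and it cannot be closed along the paper's lines. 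The paper writes ``similarly, $t_{k+1}^i-t_k^i\ge\sqrt{\delta_2}\epsilon/L$'', but this does not follow. A trigger in that regime only forces $\|e_i\|\ge\sqrt{\delta_2}\|z_{2i}\|$, and $\|z_{2i}\|$ may be arbitrarily small. Your ratio argument needs $\|\dot z_{2i}\|\le c(\|z_{2i}\|+\|e_i\|)$. That bound fails because $\dot z_{2i}$ contains the non-vanishing term $\Gamma_{v,i}(\tilde W_i^T\phi_i+\varepsilon_i+d_i)$. In a scalar channel where $z_{2i}$ crosses zero at $t^*$ with slope $a\neq0$, the rule gives $t^*-t_{k+1}^i\approx(t^*-t_k^i)/(1+\sqrt{\delta_2})$, so the triggering instants accumulate at $t^*$. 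The absolute offset $\Psi_i=\|e_i\|^2-\sigma_i\|z_{2i}\|^2-\mu_i$ does give $t_{k+1}^i-t_k^i\ge\sqrt{\mu_i}/L$, at the price of $\mu_i$-dependent terms in $C$. That, however, is a theorem about a different SETM. For the stated mechanism the Zeno claim remains unproved and is not true in general.

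\textbf{Two further steps fail as written.} First, the spacing constraint does not ``follow in the same way'': nothing in your argument bounds $s_{d,i}$, which appears neither in $V$ nor in the control law. Bounded $z_{1i}$ only keeps $d_i$ within a multiple of $\|z_1\|/\lambda_2(L)$ of the desired gap. Hence $d_{\min}<d_i<d_{\max}$ needs an extra smallness condition on $V(0)$ and $C/\lambda$ relative to the spacing margins. The paper's appeal to bounded $z_{1i}$ plus the inverse map has the same hole. Second, your Young step drops the gain. With your correctly derived term $-\Gamma_{v,i}z_{2i}^TK_{2i}e_i$, the bound is $-z_{2i}^TK_{2i}e_i\le\frac{k_2}{2}(1+\sigma)\|z_{2i}\|^2$, not $\frac12(1+\sigma)\|z_{2i}\|^2$. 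So the event term is dominated as soon as $\sigma<1$. Meanwhile, several terms must also be absorbed through $k_2$: the $\frac14\|z_{2i}\|^2$ terms you add for $\varepsilon_i$, $d_i$ and $\Delta_i$, and the uncancelled $(\Gamma_{v,i}-1)z_{2i}^T\tilde W_i^T\phi_i$, which remains because the update law lacks the factor $\Gamma_{v,i}$. The threshold $k_2>\frac12+\frac{\sigma}{2}$ comes only from the paper writing a unit-coefficient $+e_i$ in its closed loop and ignoring the other Young terms. UUB survives under a stronger gain condition. Your own computation does not support the claim that the stated hypothesis is ``exactly'' what is needed.
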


\begin{proof}
In the backstepping design, define $z_{1i}$ as the position error. Due to the SETM in Section \ref{subsec:SETM}, the control input is updated only at triggering instants and held constant within $t \in [t_k,t_{k+1})$. This introduces the measurement error: 
\begin{equation}
    e_i(t)=z_{2i}(t_k)-z_{2i}(t).
\end{equation}

Using the RBFNN approximation $f_i(x_i)=W_i^T\phi_i+\varepsilon_i$, the closed-loop second-layer error dynamics are: 
\begin{equation}
    \dot{z}_{2i} = \Gamma_{v,i} \left[ \tilde{W}_i^T\phi_i - k_2 z_{2i} + e_i + \varepsilon_i + d_i \right]
\label{eq:z2_dynamic_final}
\end{equation}
where $\tilde{W}_i=W_i-\hat{W}_i$ denotes the weight estimation error. With the leakage term, the weight estimation error remains bounded under bounded disturbances.Consider the following global Lyapunov function candidate:
\begin{equation}
    V = \sum_{i=1}^{N} \left( \frac{1}{2} z_{1i}^T z_{1i} + \frac{1}{2} z_{2i}^T z_{2i} + \frac{1}{2} \tilde{W}_i^T \Gamma_i^{-1} \tilde{W}_i \right)
    \label{eq:V_global}
\end{equation}

Taking the time derivative of \eqref{eq:V_global} and substituting the closed-loop dynamics, we have:
\begin{equation}
\begin{split}
    \dot{V} = \sum_{i=1}^{N} \Big[ & z_{1i}^T \dot{z}_{1i} + z_{2i}^T \Gamma_{v,i} \left( \tilde{W}_i^T \phi_i - k_2 z_{2i} + e_i + \varepsilon_i + d_i \right) \\
    & - \tilde{W}_i^T \Gamma_i^{-1} \dot{\hat{W}}_i \Big]
\end{split}
\label{eq:V_dot_raw}
\end{equation}

Substituting the modified adaptive law into \eqref{eq:V_dot_raw}, the cross terms associated with the neural-network approximation are compensated. The additional leakage-related term can be upper bounded via Young's inequality and absorbed into the bounded residual term, while simultaneously preventing unbounded drift of the adaptive weights. Applying Young's inequality and the SETM triggering condition $\|e_i\|^2 \le \sigma_i \|z_{2i}\|^2$, \eqref{eq:V_dot_raw} can be upper-bounded as:
\begin{equation}
    \begin{split}
        \dot{V} \le \sum_{i=1}^{N} \Big[ &-k_1 \Gamma_{d,i} \|z_{1i}\|^2 \\
        &- \Gamma_{v,i} \left( k_2 - \frac{1}{2} - \frac{\sigma_i}{2} \right) \|z_{2i}\|^2 + C_i \Big]
    \end{split}
    \label{eq:V_dot_final}
\end{equation}
where $C_i$ is a positive constant collecting the bounds of the approximation errors, disturbances, and leakage terms. If the controller gain satisfies $k_2 > \frac{1}{2} + \frac{\sigma}{2}$ with $\sigma = \max\{\sigma_i\}$, then \eqref{eq:V_dot_final} implies:
\begin{equation}
    \dot{V} \le -\lambda V + C
    \label{eq:V_compact}
\end{equation}
where $\lambda > 0$ and $C > 0$ are design-dependent constants. Solving the inequality \eqref{eq:V_compact} yields:
\begin{equation}
    0 \le V(t) \le \frac{C}{\lambda} + \left( V(0) - \frac{C}{\lambda} \right) e^{-\lambda t}
\end{equation}

This result guarantees that all closed-loop signals, including $z_{1i}$, $z_{2i}$, and $\tilde{W}_i$, are uniformly ultimately bounded (UUB). Thus, the leakage term preserves the UUB property and guarantees bounded adaptive weights. Furthermore, since the transformed states $z_{1i}$ and $z_{2i}$ are bounded, the properties of the inverse logarithmic mapping ensure that the physical spacing and velocity remain strictly within the prescribed admissible sets.

Since all closed-loop signals are bounded, there exists a constant $L>0$ such that $\|\dot{z}_{2i}\|\le L$, and hence $\|e_i(t)\|\le L(t-t_k^i)$. If $\|z_{2i}\|\ge\epsilon$, then $\|e_i\|\ge \sqrt{\delta_1}\epsilon$ and thus $t_{k+1}^i-t_k^i\ge \sqrt{\delta_1}\epsilon/L$. If $\|z_{2i}\|<\epsilon$, similarly, $t_{k+1}^i-t_k^i\ge \sqrt{\delta_2}\epsilon/L$. Therefore,
\begin{equation}
    T_{\min}=\frac{\min\{\sqrt{\delta_1},\sqrt{\delta_2}\}\epsilon}{L}>0,
\end{equation}
which excludes Zeno behavior.
\end{proof}

	\section{Simulation}\label{sec:example}

\subsection{Simulation setup}
Simulations are conducted in MATLAB to evaluate the proposed adaptive switched ETC scheme. The formation consists of one leader and three followers, with vehicle masses $m_i$ set as $1800$ kg, $1700$ kg, $1750$ kg and $1850$ kg for $i=1,\dots,4$, respectively. The longitudinal dynamics are given by $\dot{v}_{xi} = (u_{xi}+f_{xi}+d_{xi})/m_i$, where the unknown term $f_{xi}$ is approximated by an RBFNN. The simulation lasts $50$ s with a sampling period of $0.001$ s.

 The target is a two-lane square formation with a longitudinal gap of $12$ m. The leader maintains $12$ m/s for $t < 25$ s, decelerates to $6$ m/s between $25$ s and $35$ s, and remains stable thereafter.

   \subsection{Results analysis}

\begin{figure*}
    \centering
    % 第一幅图
    \subfloat[Longitudinal tracking]{\includegraphics[width=0.33\linewidth]{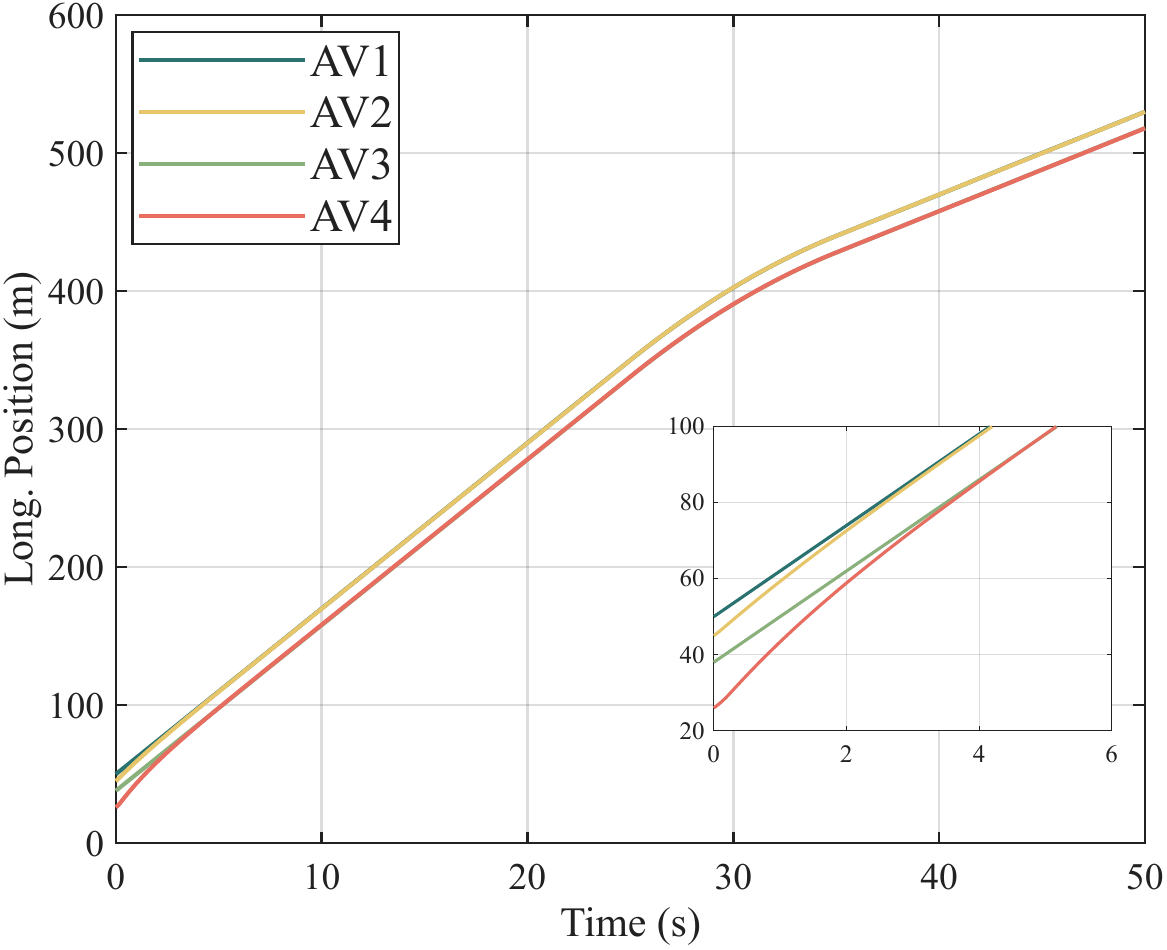}\label{fig:square_long}}
    \hfill 
    % 第二幅图
    \subfloat[Lateral tracking]{\includegraphics[width=0.33\linewidth]{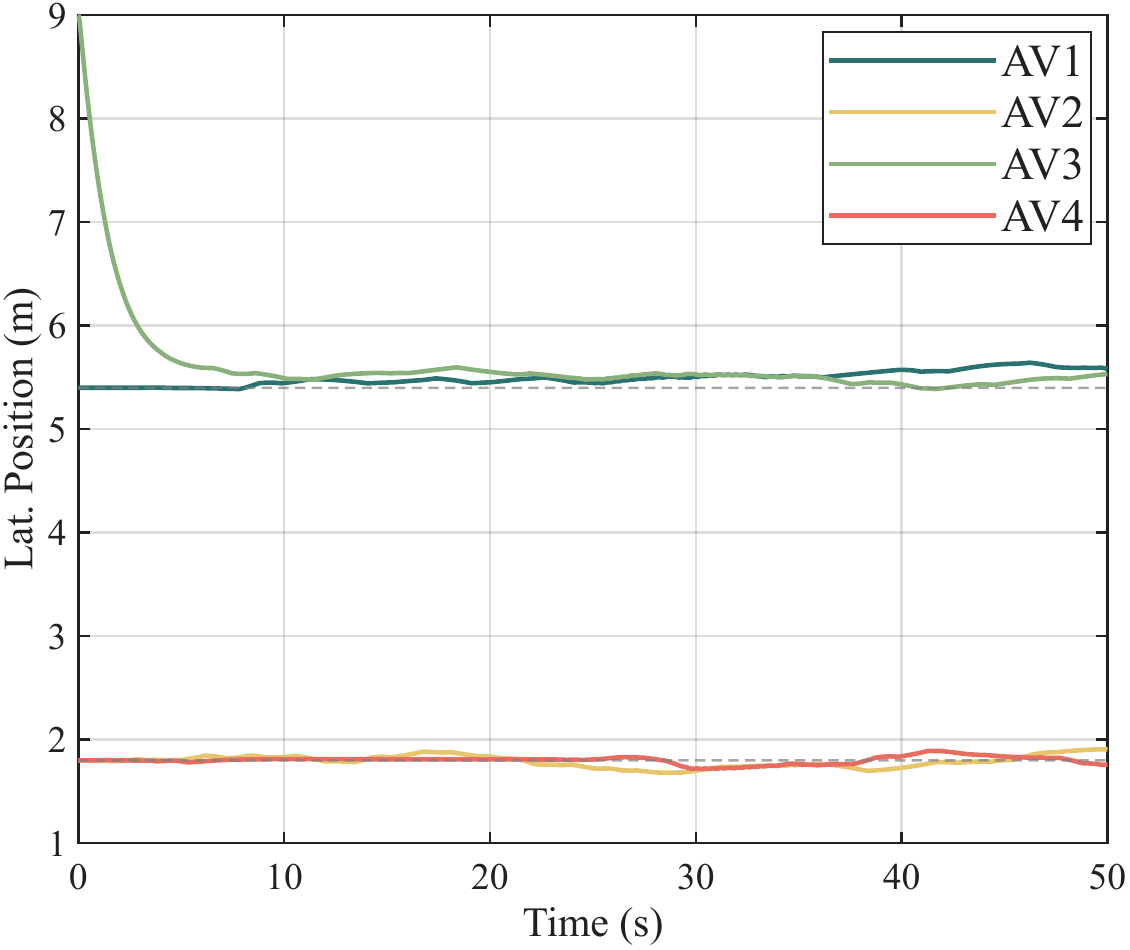}\label{fig:square_lat}}
    \hfill
    % 第三幅图：
    \subfloat[Safe distances]{\includegraphics[width=0.33\linewidth]{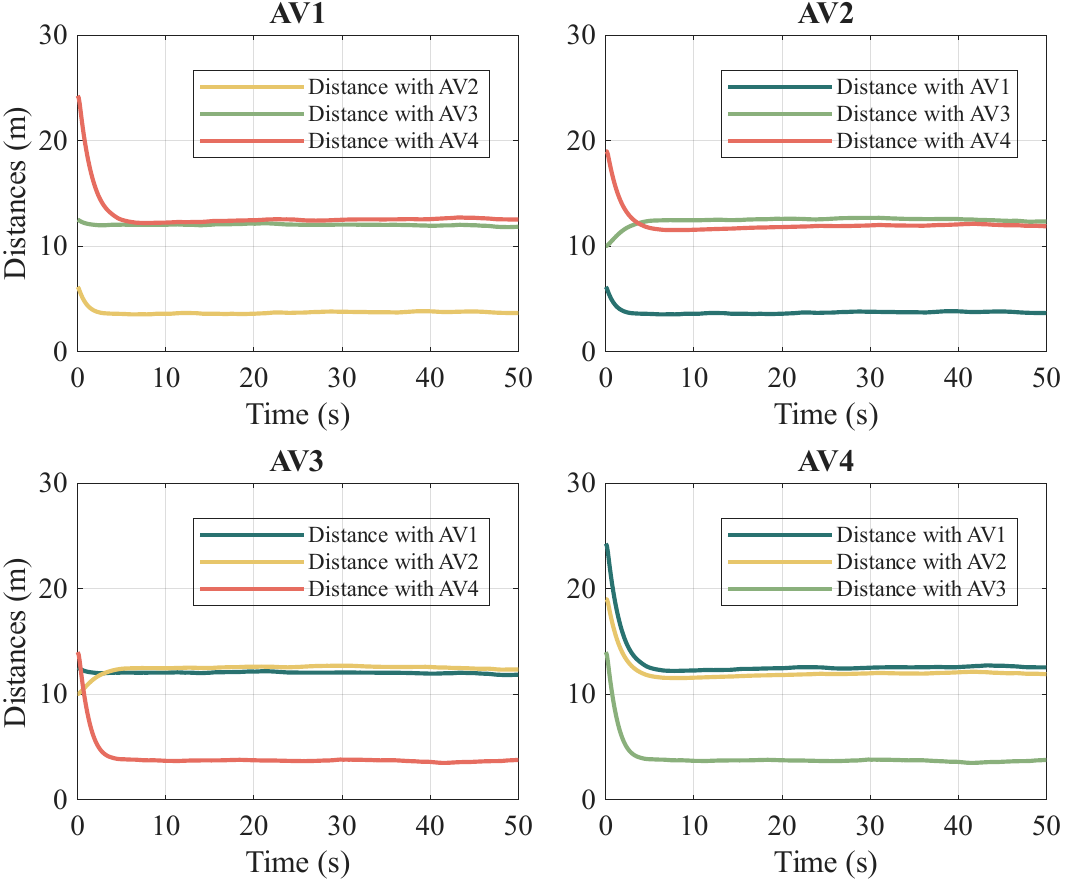}\label{fig:square_dist}}
    
    \caption{Longitudinal and lateral tracking performance of the vehicle formation control and safe distances in control process.}
    \label{tracking}
\end{figure*}

Figures \ref{fig:square_long} and \ref{fig:square_lat} illustrate the longitudinal and lateral tracking performance.
Despite initial positional deviations, the vehicles achieved the preset following distance of 12 metres within approximately 4 seconds.
When the lead vehicle decelerated, the convoy remained stable, indicating accurate tracking performance.

Figure \ref{fig:square_dist} illustrates the trend in inter-vehicle distances.
Throughout the simulation, all inter-vehicle distances remained above the safety threshold, confirming that the collision avoidance function and the formation topology were maintained.

\begin{figure*}[htbp]
\centering

\subfloat[Longitudinal velocity]{
    \includegraphics[width=0.48\linewidth]{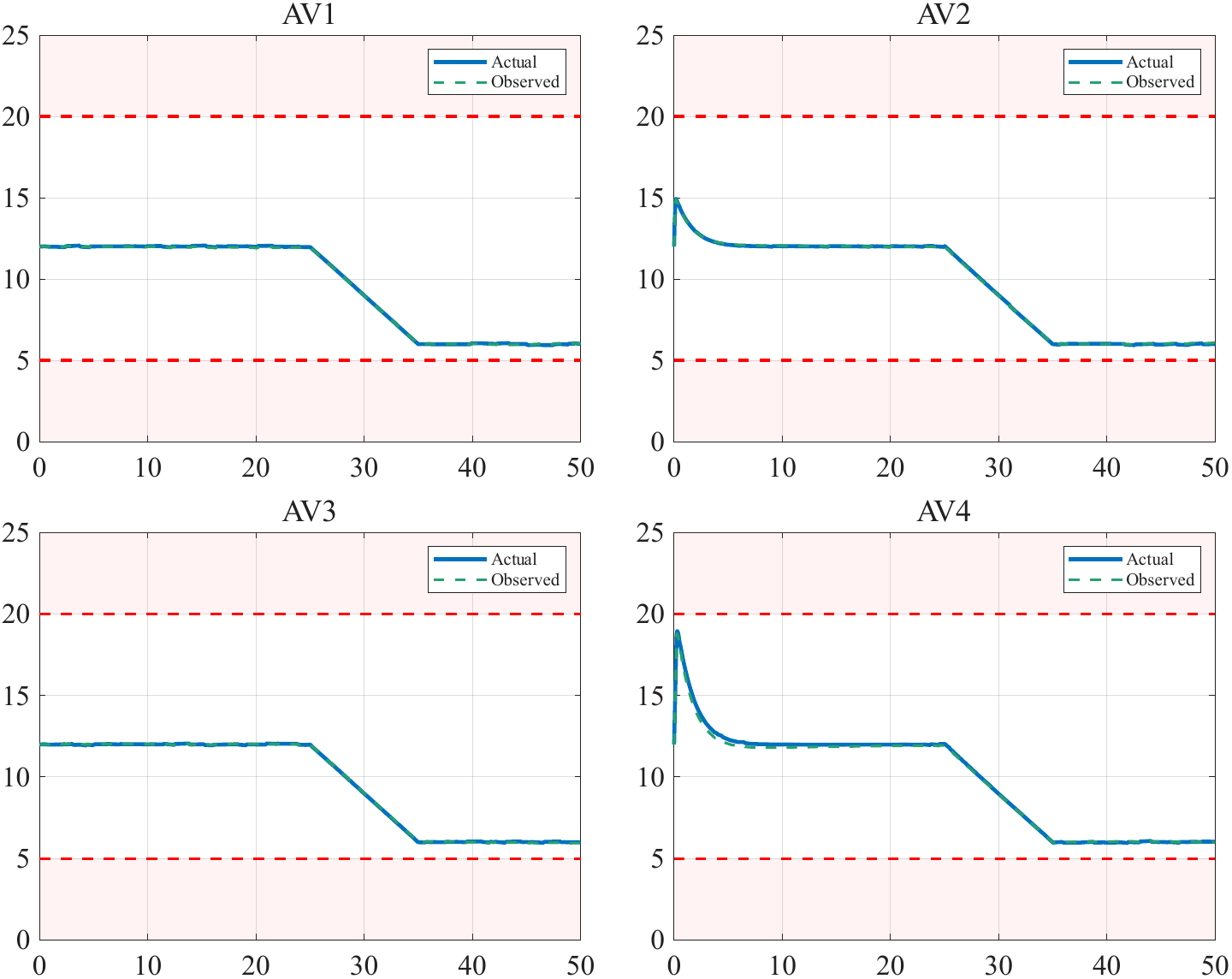}
    \label{fig:long_audit}
}
\hfill
\subfloat[Lateral velocity]{
    \includegraphics[width=0.48\linewidth]{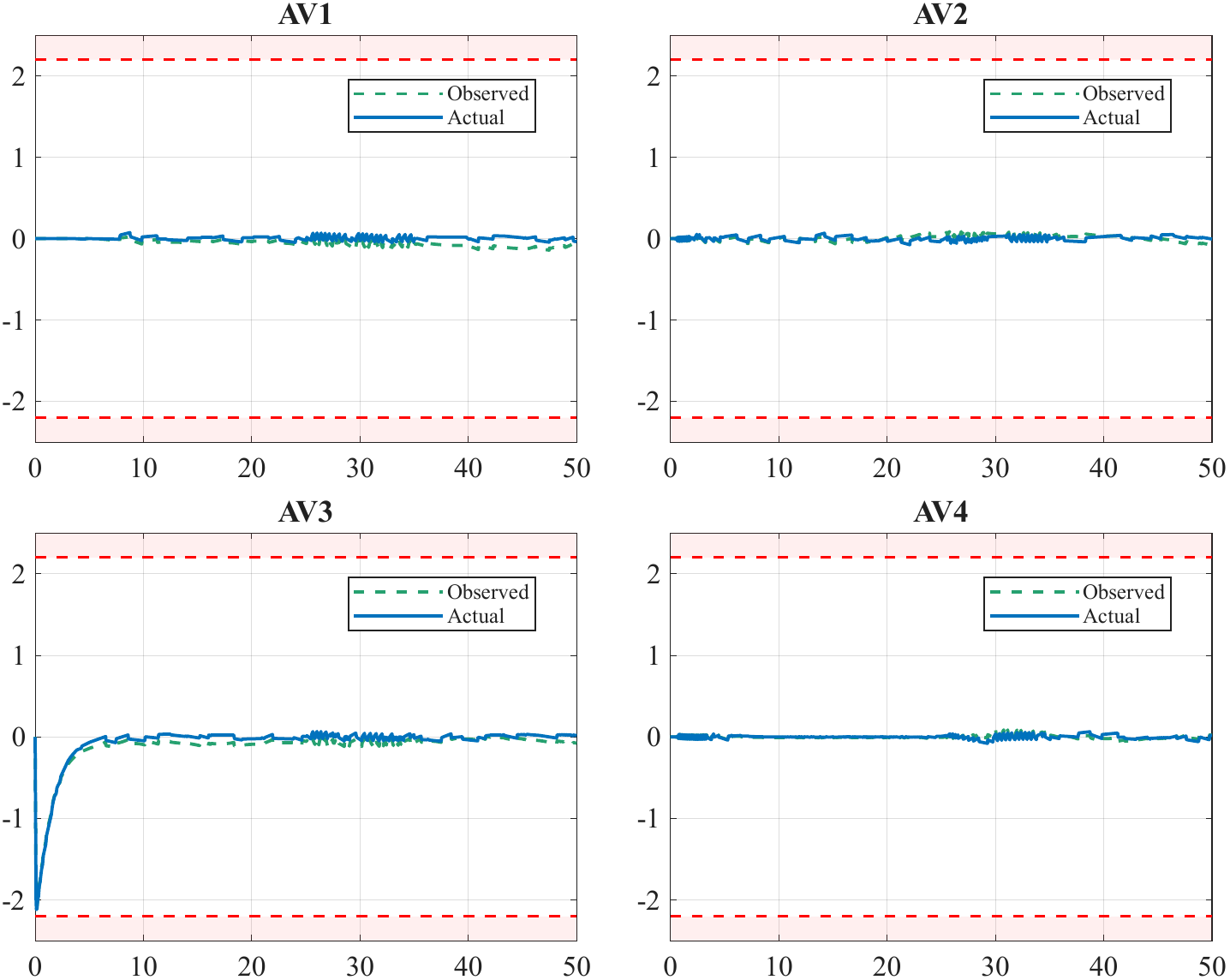}
    \label{fig:lat_audit}
}

\caption{Safety admissibility audit for longitudinal and lateral velocities of the AV formation.}
\label{fig:speed_audit}

\end{figure*}

Figures \ref{fig:long_audit} and \ref{fig:lat_audit} show the longitudinal and lateral velocity responses, respectively. Throughout the simulation, both the actual and observed velocities remained within the predefined constraints. Even during the deceleration phase of the lead vehicle, the velocity trajectories remained within the permitted range, confirming that the proposed diffeomorphic mapping is effective in enforcing the velocity constraints.

\begin{figure}
    \centering
    \includegraphics[width=0.99\linewidth]{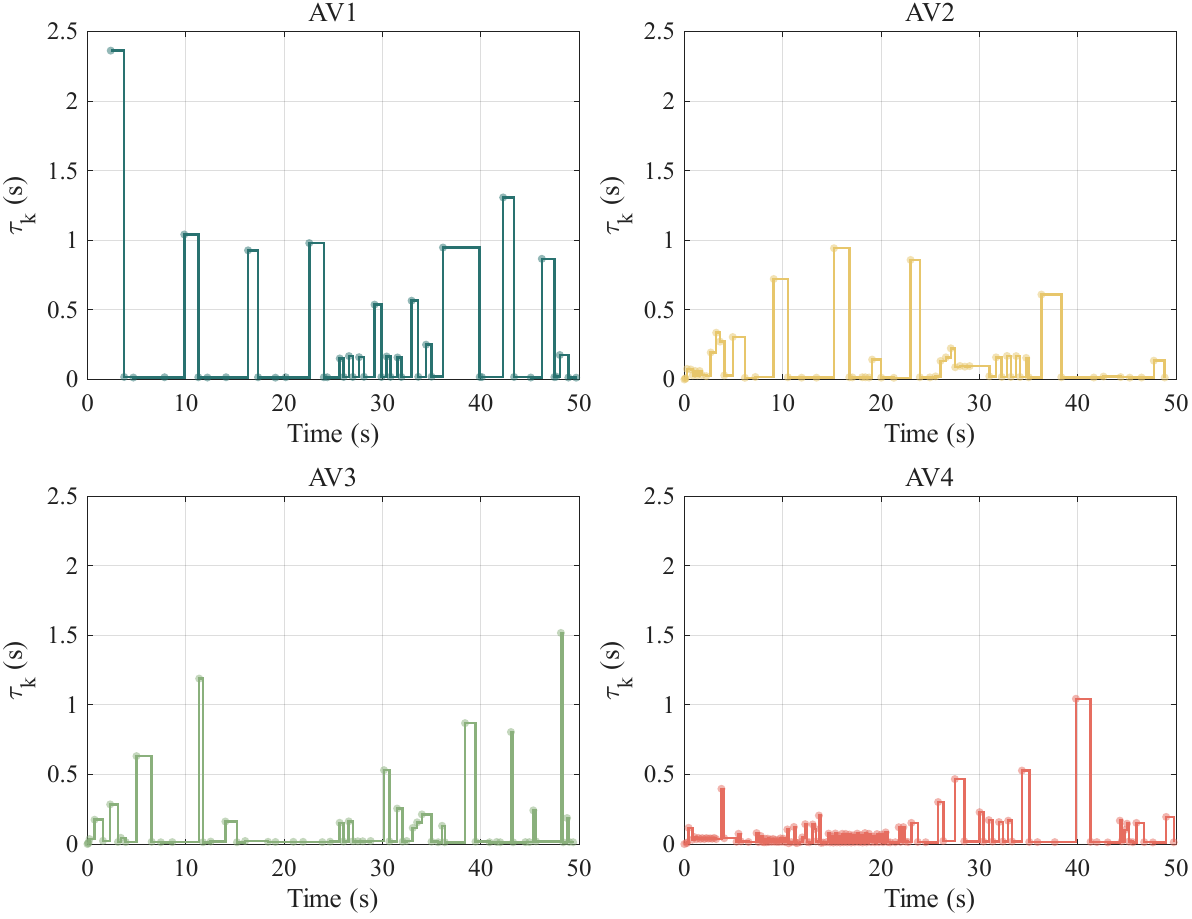}
    \caption{The Longitudinal update time interval of SETM law
for the AVs in square formation control process.}
    \label{fig:interval}
\end{figure}

\begin{table}
    \centering
\caption{Control signal update statistics.}
    \begin{tabular}{ccc}
    \hline
    Vehicle & Total Triggers& Reduction (\%)\\
    \hline
    AV1 & 212& 99.58\\
    AV2 & 301& 99.40\\
    AV3 & 306& 99.39\\
    AV4 & 678& 98.64\\
    \hline
    \end{tabular}
    \label{tab:triggering_ratio}
\end{table}

Figure \ref{fig:interval} shows the trigger intervals generated by SETM. During the transient phase, these intervals become shorter; during steady-state operation, they become longer.

Table \ref{tab:triggering_ratio} lists the total number of trigger events for each vehicle. Compared to periodic control, the communication load is reduced by more than 98\%. Furthermore, all trigger intervals are strictly maintained within the positive range, which is consistent with the results of the theoretical analysis of the no-Zeno phenomenon.

\begin{remark}
Compare with \cite{ref_wang_arxiv_2025}, the proposed framework explicitly incorporates state-constraint conditions by imposing limits on vehicle speed, thereby replicating realistic scenarios where speed restrictions apply to moving vehicles. This imposes higher demands on the control process.
\end{remark}

\section{Conclusion}\label{sec:conclusion}

This paper studies vehicular formation control with full-state constraints. A smooth nonlinear mapping is introduced to transform constrained states into unconstrained coordinates, thereby avoiding control singularities near the constraint boundaries. Based on this, an adaptive backstepping controller with an RBFNN is developed to handle unknown nonlinear dynamics, and a SETM is designed to reduce communication updates. Theoretical analysis shows that all closed-loop signals are UUB, the adaptive weights remain bounded under bounded disturbances, and Zeno behavior is excluded. Simulations verify the effectiveness of the proposed method in formation tracking, constraint satisfaction, and communication reduction.

\end{document}